%
\documentclass[runningheads]{llncs}
\usepackage{graphicx}
%
\usepackage{multirow}
\usepackage{latexsym}
\usepackage{amsfonts}
\usepackage{amssymb}
\usepackage{amsmath}
\usepackage{graphicx}
\usepackage{subfigure}
\usepackage{floatrow}
\usepackage{caption}
\usepackage{color}
\usepackage{wrapfig,lipsum,booktabs}
\usepackage{semantic}
\captionsetup{compatibility=false}
\usepackage{url}
\usepackage{tikz}
\usepackage{mathdots}
\usepackage{yhmath}
\usepackage{cancel}
\usepackage{siunitx}
\usepackage{array}
\usepackage{gensymb}
\usepackage{tabularx}
\usepackage{booktabs}
\usetikzlibrary{fadings}
\usepackage{pifont}
\usepackage{floatrow}
\usepackage[ruled,vlined]{algorithm2e}
\usepackage{rotating}
\usepackage{mathtools}
\usepackage{soul}

\DeclarePairedDelimiter{\ceil}{\lceil}{\rceil}
\newcommand{\limp}[0]{\rightarrow}
\newcommand{\N}{\mathbb{N}}

\DeclareMathOperator*{\minimise}{minimise}
\newtheorem{param}{Parameter}

\newcommand{\nodefilter}[0]{\theta}
\newcommand{\rulefilter}[0]{\phi}
\newcommand{\leafmerger}[0]{\psi}
\newcommand{\sizefilter}[0]{k}
\newcommand{\accweight}[0]{\coef}

\newcommand{\simp}[0]{Simp}

\newcommand{\fstmeasure}[0]{N}
\newcommand{\sndmeasure}[0]{L}
\newcommand{\thdmeasure}[0]{M}
\newcommand{\scoreOpt}[0]{S_{opt}}
\newcommand{\scorePro}[0]{S_{pro}}
\newcommand{\coef}[0]{\epsilon}
\newcommand{\framename}[0]{OptExplain}
\newcommand{\proname}[0]{ProClass}

\graphicspath{./images/}

\begin{document}
\title{Extracting Optimal Explanations for Ensemble Trees via Logical Reasoning
}
%
%
\author{
Gelin Zhang\inst{1}
\and
Zh\'e H\'ou\inst{2}
\and
Yanhong Huang\inst{1}
\and
Jianqi Shi\inst{1}
\and
Hadrien Bride\inst{2}
\and
Jin Song Dong\inst{2,3}
\and
Yongsheng Gao\inst{2}
}
\authorrunning{Zhang, H\'ou, Huang, Shi, Bride, Dong and Gao.}
\institute{
National Trusted Embedded Software Engineering Technology Research Center, East China Normal University, China \and
Griffith University, Australia
\and
National University of Singapore, Singapore
}
\maketitle              
\begin{abstract}
Ensemble trees are a popular machine learning model which often yields high prediction performance when analysing structured data. Although individual small decision trees are deemed explainable by nature, an ensemble of large trees is often difficult to understand. In this work, we propose an approach called optimised explanation (\framename) that faithfully extracts global explanations of ensemble trees using a combination of logical reasoning, sampling and optimisation. 
Building on top of this, we propose a method called the profile of equivalent classes (\proname), which uses MAX-SAT to simplify the explanation even further\footnote[1]{The code is available at \url{https://github.com/GreeenZhang/OptExplain.}}. Our experimental study on several datasets shows that our approach can provide high-quality explanations to large ensemble trees models, and it betters recent top-performers.

\keywords{Explainable Artificial Intelligence (XAI)  \and Random Forest \and Classification \and Rule Extraction.}
\end{abstract}

\section{Introduction}
\label{sec:intro}

\paragraph{Background.}
Ensemble trees are a family of machine learning techniques that combine individual decision trees to form a better prediction model. Examples include random forest~\cite{ho1995,Breiman2001}, which combines strong learners (e.g., large trees) to reduce variance and avoid overfitting. Boosting~\cite{Freund1999,friedman2002}, on the other hand, combines weak learners (e.g., small trees) to reduce bias. Ensemble trees are very successful in today's data analytics competitions and applications; they are especially suited to analyse \emph{structured data} such as databases and spreadsheets, where they sometimes outperform deep learning~\cite{szilardbenchmark}. 

Although decision trees are often deemed an explainable, or even a ``white-box'' model, such an impression usually refers to a single, short decision tree. In the context of ensemble trees such as the models generated by random forest or boosting, there can be a large number of trees and each tree can be gigantic. For example, to achieve a 0.76+ area-under-the-curve (AUC) for the 1 million flight dataset~\cite{szilardbenchmark}, Silas~\cite{BrideDDH18,silas2018} trains a model of 100 trees and each tree has more than 32,000 branches. Such a model certainly does not manifest itself in an explainable manner to the general user. The main goal of this work is to extract faithful explanations for such large-scale models.

There are several existing methods for analysing and interpreting machine learning models. For example, the LIME tool~\cite{lime2016} and the SHAP values~\cite{Lundberg2017} are both promising techniques for solving this problem. We will discuss more details of related work in Section~\ref{sec:related}. However, most existing work is done from a statistics perspective. Such methods use a prediction model as a black-box and attempt to find statistical (e.g., linear) approximations of the model. By contrast, our philosophy is that we should analyse the internal working of the model and obtain an understanding of how it works logically. Further, many existing techniques are focused on \emph{local explanations}, that is, how the model predicts for a particular data instance. 
This work is primarily about \emph{global explanations}, which explains how the model behaves generally. Part of the reasons why we choose ensemble trees is that decision trees are no strangers to logicians. For example, binary decision diagrams, which have a similar form, are widely used in theorem proving~\cite{Gore2014} and model checking~\cite{SLD09}. The tree structure is well-understood in the logic and verification community, and there are many possibilities to apply logical reasoning to analysing ensemble trees.

\paragraph{Our approach.}
In our previous work~\cite{silas2018}, we have used sampling and maximum satisfiable subset to extract the decision logic of the model. However, it is non-trivial to manually adjust the sampling parameters, which may lead to vastly different explanations. Default parameters often lead to very simple explanations that diverge from the original model. In this paper, we propose an integrated and automated framework for providing 
global explanations. Moreover, our goal is not just to give \emph{an} explanation as is done in the literature, but to give \emph{the optimal} explanation in terms of \emph{simplicity} and \emph{faithfulness}. 

An outline of our approach follows: we extract logical formulae from a set of trees where each branch forms a ``decision rule''. We then reduce the size of the model by filtering out low-quality nodes (i.e., sub-formulae) and branches. We also devise a customised formulae simplification algorithm to obtain logically equivalent smaller models. In case there are still too many decision rules, we group the rules into ``equivalent classes'' to further abstract the model. The parameters in the above process are optimised using particle swarm towards a sweet spot of simplicity and faithfulness. As an extra step, we can simplify the explanation using MAX-SAT to obtain even more abstract representations of each equivalent class, which we call the ``profiles of classes''. Such profiles can provide straightforward and even visual explanations of the model.

The utilities of this work are manifold. First, our approach can provide human-understandable explanations that are very close to the original ensemble trees model in predictive behaviour. Second, such an explanation can also be used as an approximation of the original model. For example, verification of machine learning models is another popular topic, but a general sound and complete verification algorithm for ensemble trees have proven impractical~\cite{Tornblom2019}. As a step back, we can look at the software testing scenario: since the explanation mimics the behaviour of the original model, if it violates a property, then it is likely that the original model would fail the verification, too. In such cases, we can use the explanation to constrain the search space when finding counterexamples. Finally, this work can serve as a stepping stone towards explaining deep learning. There are existing methods for converting neural networks to decision trees~\cite{Hinton2017} exactly for explanation purposes. However, these methods only induce a single decision tree, whose predictive performance is incomparable to the deep learning model. One may convert neural networks to a set of decision trees instead, then this work can be directly applied to obtain explanations. 

\paragraph{Contributions.}
The main contributions of this paper are as follows: 
\begin{enumerate}
    \item We formalise ensemble trees into logical formulae and develop simplification and abstraction algorithms that are specialised for machine learning.
    \item We propose an automated explanation extraction method called \emph{\framename}, which combines logical reasoning, sampling, and bio-inspired optimisation.
    \item We also develop a method called \emph{\proname} that computes the abstractions of each (equivalent) class using MAX-SAT.
    \item Through case studies and experiment, we show that our method is practical and useful on different datasets. It also outperforms similar tools.
\end{enumerate}

The remainder of this paper is organised as follows: Section~\ref{sec:pre} describes the preliminary concepts, Section~\ref{sec:method} details the proposed approach, Section~\ref{sec:case_studies} gives case studies and experiment, Section~\ref{sec:related} discusses related work, and Section~\ref{sec:conc} concludes the paper.

\section{Preliminaries}
\label{sec:pre}

In this section, we redefine decision trees and their ensembles from a logical language point of view. 

\subsection{Decision Trees With a Logical Foundation}
\label{subsec:trees}

\begin{wrapfigure}{r}{0.4\textwidth}
  \begin{center}
\tikzset{every picture/.style={line width=0.75pt}} 
\vspace{-25px}
\begin{tikzpicture}[x=0.75pt,y=0.75pt,yscale=-1,xscale=1]
\draw   (367.47,166) -- (386.19,176.88) -- (367.47,187.77) -- (348.75,176.88) -- cycle ;
\draw   (312.38,226.41) .. controls (312.38,220.4) and (320.76,215.53) .. (331.1,215.53) .. controls (341.44,215.53) and (349.82,220.4) .. (349.82,226.41) .. controls (349.82,232.42) and (341.44,237.3) .. (331.1,237.3) .. controls (320.76,237.3) and (312.38,232.42) .. (312.38,226.41) -- cycle ;
\draw    (367.47,187.77) -- (332.69,214.31) ;
\draw [shift={(331.1,215.53)}, rotate = 322.65] [color={rgb, 255:red, 0; green, 0; blue, 0 }  ][line width=0.75]    (10.93,-3.29) .. controls (6.95,-1.4) and (3.31,-0.3) .. (0,0) .. controls (3.31,0.3) and (6.95,1.4) .. (10.93,3.29)   ;
\draw    (367.47,187.77) -- (401.75,215.36) ;
\draw [shift={(403.3,216.62)}, rotate = 218.82999999999998] [color={rgb, 255:red, 0; green, 0; blue, 0 }  ][line width=0.75]    (10.93,-3.29) .. controls (6.95,-1.4) and (3.31,-0.3) .. (0,0) .. controls (3.31,0.3) and (6.95,1.4) .. (10.93,3.29)   ;
\draw   (403.3,216.62) -- (422.03,227.5) -- (403.3,238.38) -- (384.58,227.5) -- cycle ;
\draw    (403.3,238.38) -- (437.58,265.98) ;
\draw [shift={(439.14,267.23)}, rotate = 218.82999999999998] [color={rgb, 255:red, 0; green, 0; blue, 0 }  ][line width=0.75]    (10.93,-3.29) .. controls (6.95,-1.4) and (3.31,-0.3) .. (0,0) .. controls (3.31,0.3) and (6.95,1.4) .. (10.93,3.29)   ;
\draw    (403.3,238.38) -- (368.52,264.93) ;
\draw [shift={(366.93,266.14)}, rotate = 322.65] [color={rgb, 255:red, 0; green, 0; blue, 0 }  ][line width=0.75]    (10.93,-3.29) .. controls (6.95,-1.4) and (3.31,-0.3) .. (0,0) .. controls (3.31,0.3) and (6.95,1.4) .. (10.93,3.29)   ;
\draw   (348.21,277.03) .. controls (348.21,271.01) and (356.59,266.14) .. (366.93,266.14) .. controls (377.27,266.14) and (385.65,271.01) .. (385.65,277.03) .. controls (385.65,283.04) and (377.27,287.91) .. (366.93,287.91) .. controls (356.59,287.91) and (348.21,283.04) .. (348.21,277.03) -- cycle ;
\draw   (420.42,278.12) .. controls (420.42,272.1) and (428.8,267.23) .. (439.14,267.23) .. controls (449.48,267.23) and (457.86,272.1) .. (457.86,278.12) .. controls (457.86,284.13) and (449.48,289) .. (439.14,289) .. controls (428.8,289) and (420.42,284.13) .. (420.42,278.12) -- cycle ;
\draw (367.47,176.88) node  [align=left] {{\fontfamily{ptm}\selectfont $F_1$}};
\draw (403.3,227.5) node  [align=left] {{\fontfamily{ptm}\selectfont $F_2$}};
\draw (331.1,226.41) node  [align=left] {{\fontfamily{ptm}\selectfont {\footnotesize (0,6)}}};
\draw (366.93,277.03) node  [align=left] {{\fontfamily{ptm}\selectfont {\footnotesize (2,1)}}};
\draw (439.14,278.12) node  [align=left] {{\fontfamily{ptm}\selectfont {\footnotesize (3,1)}}};
\draw (341.99,191.49) node  [align=left] {{\fontfamily{ptm}\selectfont {\footnotesize false}}};
\draw (377.84,242.46) node  [align=left] {{\fontfamily{ptm}\selectfont {\footnotesize false}}};
\draw (396.64,192.24) node  [align=left] {{\fontfamily{ptm}\selectfont {\footnotesize true}}};
\draw (431.44,243.55) node  [align=left] {{\fontfamily{ptm}\selectfont {\footnotesize true}}};
\end{tikzpicture}
  \end{center}
  \caption{An example decision tree.}
  \label{fig:tree}
\vspace{-10px}  
\end{wrapfigure}
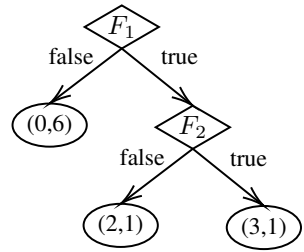

In supervised learning, a structured dataset for classification is defined as set of \emph{instances} of the form $(\vec{x},y )$ where $\vec{x} = [x_1, ..., x_n]$, $n \in \N$, is an input vector called \emph{features} and $y$ is an outcome value often called the \emph{label}. We denote by $X$ the feature space and $Y$ the outcome space. 


A \emph{decision tree} is composed of internal nodes (diamonds in Fig.~\ref{fig:tree}) and terminal nodes called leaves (ovals in Fig.~\ref{fig:tree}). Each internal node is associated with a logical formula over a feature. Each leaf node contains a set of instances, which yield a \emph{vote distribution} of the form $(n_1,\cdots,n_m)$ where $m$ is the number of classes and $n_i$ ($1\leq i \leq m$) is the number of instances of the corresponding class. For example, in Fig.~\ref{fig:tree}, the leftmost leaf node $(0,6)$ indicates that there are $0$ \texttt{class0} instances and $6$ \texttt{class1} instances. Without loss of generality, we focus on binary trees, in which internal nodes have two successors respectively called the left and right child nodes. By convention, the instances that satisfy the logical formula of an internal node go to the right child node, and those that do not satisfy go to the left child node. For example, in Fig.~\ref{fig:tree}, let $I$ be the set of training instances associated with the root node, $I_1\subset I$ be the subset that satisfies the formula $F_1$, then $I_1$ will be the set of instances associated with the right child node (with formula $F_2$),  and $I_2 = I \setminus I_1$ will be the set of instances associated with the left child (leaf) node.

Given a decision tree, any input vector (or instance) is associated with a single leaf. A decision tree is, therefore, a compact representation of a function of the form $t: X \to \N^m$, where $m$ is the number of classes. The output of a decision tree is a distribution of \emph{votes} for each class. To obtain an outcome in $Y$, we take the class with the most votes.



\subsection{Ensemble of Decision Trees}
\label{subsec:ensemble}

We adopt the definitions of Cui et al.~\cite{Cui2015}. Let an ensemble be a set of decision trees of size $T$. It gives the weighted sum of the trees as follows:
\begin{equation}
E(x) = \sum_{i=1}^{T} w_i\cdot t_i(x)
\end{equation}
where $E$ is the function for the ensemble, $w_i$ and $t_i$ are respectively the weight and function for each tree. The summation aggregates the weighted votes from each tree and obtains the final votes for each class. Thus, the ensemble is also a function of the signature $E: X \to \N^m$ and requires a voting mechanism to obtain the outcome. We give some famous examples of ensemble trees below.

\paragraph{Bagging.} Each decision tree is trained using a subset of the dataset that is sampled uniformly with replacement. The remaining instances form the out-of-bag (OOB) set. When selecting the best formula at each decision node in a tree, only a subset of the features are considered. This is commonly found in algorithms such as Random Forest~\cite{Breiman2001}. Bagging grows large trees with low bias and the ensemble reduces variance.

\paragraph{Boosting.} Boosting trains weak learners, i.e., small trees, iteratively as follows:
\begin{equation}
E_{i+1}(x) = E_i(x) + \alpha_i\cdot t_i(x)
\end{equation}
where $t_i$ is the weak leaner trained at iteration $i$ and $\alpha_i$ is its weight. The final ensemble is thus a special case of $E(x)$ above where $w_i$ is $\alpha_i$. The ensemble reduces bias.

AdaBoost~\cite{Freund1999} focuses on training instances that are misclassified in the previous iteration by optimising $\alpha_i$ and $t_i$ in the formula below:
\begin{equation}
\minimise_{\alpha_i,t_i} \sum_{j=0}^N L(y^{(j)}, E_i(x^{(j)}) + \alpha_i\cdot t_i(x^{(j)}))
\end{equation}
where $L$ is a loss function measuring the difference between the actual outcome $y^{(j)}$ of instance $j$ and $E_{i+1}(x^{(j)})$. AdaBoost often uses exponential loss $L(a,b) = e^{-a\cdot b}$ in which case the weak learners are trained by weighted instances.

Gradient Boosting~\cite{Friedman2000} is a generalisation of the above minimisation where each tree is trained using
\begin{equation}
    t_i(x^{(j)}) \approx - \frac{\partial L(y^{(j)}, E_i(x^{(j)}))}{\partial E_i(x^{(j)})}
\end{equation}
which is equivalent to training a regression tree using original data points but with new outcome values defined by the negative gradient. In this case, $\alpha_i$ is the learning rate.

In this work, we evaluate our approach using random forest, but our approach can also be adapted to handle boosting models.
\section{The Proposed Method: {\framename} and {\proname}}
\label{sec:method}

\begin{figure}
    \centering
    \vspace{-30pt}
    \includegraphics[width=\textwidth]{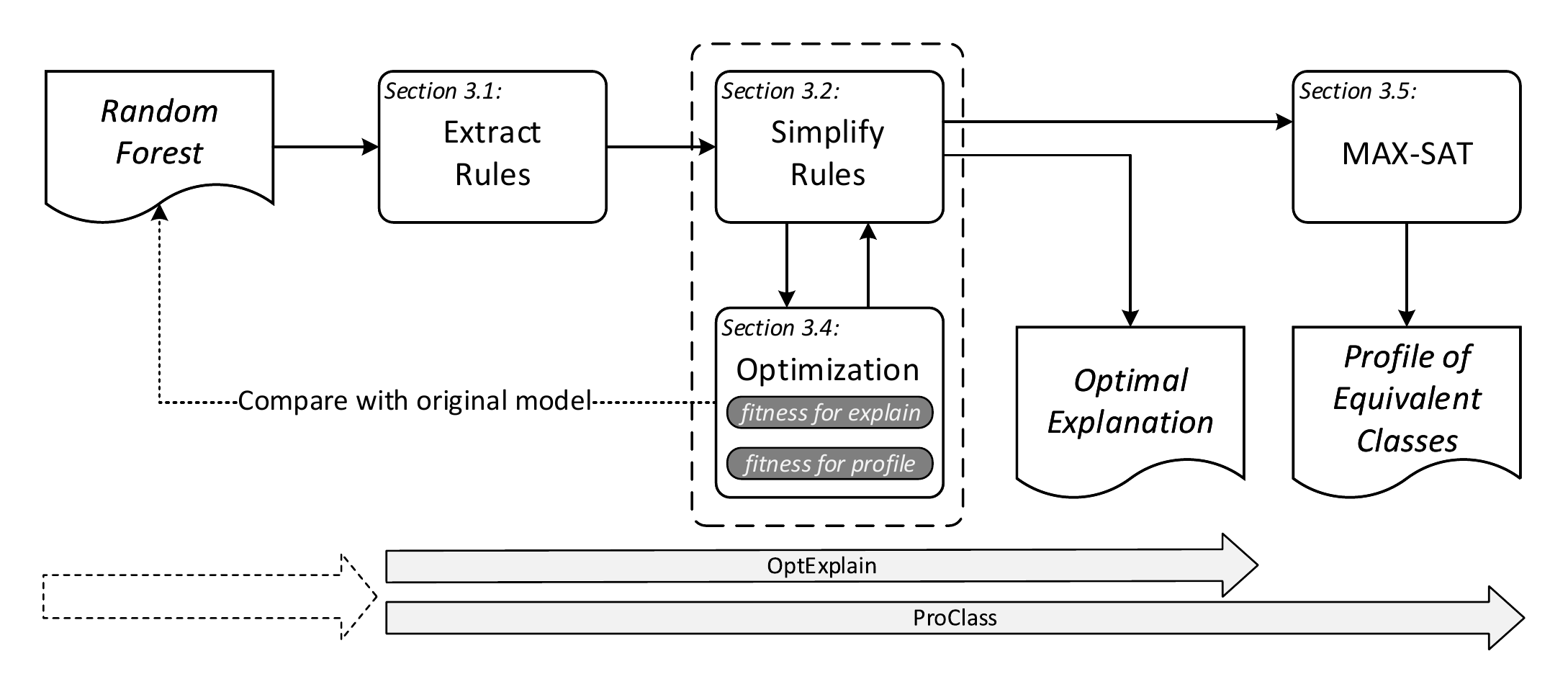}
    \caption{An overview of the proposed method.}
    \label{fig:overview}
\end{figure}

This section details our ensemble trees explanation extraction approach. We give an overview of our approach in Figure~\ref{fig:overview}. First, we get a set of decision rules by traversing the random forest. This set of decision rules is made up of branch formulae for all the trees in the forest. The size of this set is usually large, and we need to simplify the set in order to obtain a smaller set that is as close as possible to the prediction ability of the original model. We keep the nodes and branches with good quality to get the final explanation set. The preciseness of the simplification process and the scale of the explanation are related to the input parameters, and we get the optimized parameters through the bionic optimization algorithm PSO (Particle Swarm Optimization). There is two fitness function in optimization process, one dedicate to generate optimal explanations while the other one is to generate explanations used by profile.




\subsection{From Decision Trees to Decision Rules}
\label{sec:convert}

Given a decision tree defined in Section~\ref{sec:pre}, it is straightforward to obtain the formula that is associated with each internal node. From there, we can obtain the ``decision logic'' of a branch via the \emph{branch formula} of the following form:
\begin{equation}
\label{eq:branch-fm}
(\bigwedge F) \limp D
\end{equation}
where $\bigwedge F$ is the conjunction of all the (possibly negated) internal node formula on the branch, and $D$ is the vote distribution at the leaf node. That is, if the branch goes through a node $F$ to the right branch, then we include $F$ in the conjunction, otherwise we include $\lnot F$. We refer to a formula of the above form as a \emph{decision rule}. 




It is worth noting that, by the construction of decision trees, the branches are \emph{exclusive} to each other. That is, a data instance can never satisfy multiple branch formulae from the same tree at the same time. 
As a result, a decision tree $t$ can be converted into a set of mutually exclusive decision rules, denoted as $R_t$. $R_t$ can be used in classification tasks by finding the decision rule that satisfies an instance and outputting the class of the largest number of votes.

The above method can be extended to handle an ensemble of trees produced by random forest~\cite{Breiman2001} or boosting~\cite{Freund1999,Friedman2000}. In such cases, we need to consider a set $E$ of trees, and we need to multiply the vote distribution $D$ of each tree by its weight in the ensemble. The result, which we refer to as $R_E$, is the union of the set of decision rules from each tree. 




Unlike the set of decision rules for a single tree, that for an ensemble of trees may contain decision rules that are \emph{not} exclusive to each other. In fact, for an ensemble $E$ of $n$ trees and for any data instance, there should be exactly $n$ decision rules in the set $R_E$ 
that are satisfied by the instance --- one from each tree. To use $R_E$ in classification tasks, one can find the subset of decision rules that are satisfied by an instance $x$ and aggregate the weighted vote counts for each class from those rules. The class with the highest weighted count is the output, which we refer to as $R_E(x)$.
The following result is straightforward:

\begin{proposition}
\label{prop:equiv}
For any ensemble trees model $E$ and any data instance $x$, suppose $R_E$ is the set of decision rules of $E$ derived by the method above,
then $E(x) = R_E(x)$.
\end{proposition}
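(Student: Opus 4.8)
The plan is to unfold both sides of the claimed equality to the level of vote distributions and show they are term-by-term identical. Recall that $E(x) = \sum_{i=1}^{T} w_i \cdot t_i(x)$, so it suffices to show that the aggregated weighted vote count produced by $R_E$ on input $x$ equals the sum of the weighted vote distributions $w_i \cdot t_i(x)$ over all trees $t_i$ in $E$.

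First I would fix an arbitrary data instance $x$ and an arbitrary tree $t_i$ in the ensemble. The key structural fact, already noted in the excerpt, is that the branches of a single decision tree are mutually exclusive and exhaustive: $x$ follows exactly one root-to-leaf path in $t_i$, ending at a unique leaf with vote distribution $D_i$, and by definition $t_i(x) = D_i$. I would then argue that the branch formula $(\bigwedge F) \limp D_i$ corresponding to that path is precisely the one decision rule in $R_{t_i}$ whose antecedent $\bigwedge F$ is satisfied by $x$ — this is immediate from the construction in Section~\ref{sec:convert}, since the literals in the conjunction are exactly the (possibly negated) node tests encountered along the path $x$ takes, and any other branch differs from this path at some node, forcing a conflicting literal. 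Hence the unique rule of $R_{t_i}$ satisfied by $x$ contributes exactly $w_i \cdot D_i = w_i \cdot t_i(x)$ to the aggregate.

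Next I would assemble the global picture: since $R_E = \bigcup_{i=1}^{T} R_{t_i}$, the set of rules in $R_E$ satisfied by $x$ is the disjoint union, over $i$, of the singleton sets of satisfied rules from each $R_{t_i}$ (disjoint because rules from distinct trees are distinct syntactic objects, and within a tree exclusivity was just used). Summing the weighted vote contributions therefore gives $\sum_{i=1}^{T} w_i \cdot t_i(x)$, which is exactly $E(x)$ as a vote distribution. Since $R_E(x)$ is defined as this aggregated distribution (or the argmax thereof) and $E(x)$ likewise, we conclude $E(x) = R_E(x)$.

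The only mildly delicate point — and the one I would state carefully rather than wave through — is the exclusivity-and-exhaustiveness claim for a single tree and its correspondence to the branch formulae, i.e., that exactly one antecedent in $R_{t_i}$ is satisfied by $x$ and it is the one on $x$'s path. Everything else is bookkeeping: linearity of the summation and the fact that a union of decision-rule sets indexed by trees keeps the per-tree contributions separate. No optimisation, sampling, or simplification is involved here, so the proposition really is, as the authors say, straightforward; the proof is essentially just making the definitions meet.
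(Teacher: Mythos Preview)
Your proposal is correct. The paper does not actually supply a proof --- it simply declares the result ``straightforward'' and moves on --- so your argument is a faithful elaboration of exactly the definitions (mutual exclusivity and exhaustiveness of branches within each tree, one satisfied rule per tree, linearity of the weighted sum over the union $R_E = \bigcup_i R_{t_i}$) that the authors evidently had in mind.
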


In the sequel, we shall denote the original ensemble trees model as $E$
and the converted set of decision rules as $R_E$.

\subsection{Simplification of Decision Rules}
\label{sec:simp}

As discussed in Section~\ref{sec:intro}, some ensemble trees models used in real-life applications are enormous and complex. Consequently, the converted set of decision rules for such a model consists of a huge number of rules and each rule may be a very long formula. To reduce the complexity of the explanation, we consider simplifying the set of decision rules in two dimensions: the length of the rules and the number of the rules.


Continuing from the output $R_E$ of 
Section~\ref{sec:convert},
each formula in $R_E$ is a branch formula, which we can simplify using a node filter as step one.




\begin{param}[Node Filter $\nodefilter$]
\label{param:node-filter}
We measure the “quality” of a node ($NQ$) by information gain (IG):
\begin{equation}
    \label{equ:NQ}
    NQ(n)=IG(n)
\end{equation}
where $n$ is the target node. We scan the nodes of each branch in each tree, and remove the nodes with $NQ$ below $\nodefilter$, which is a real positive number.

\end{param}

The second step is to simplify each branch formula by merging the nodes.

\begin{lemma}
\label{lem:simp-conjunct}
For any branch formula $(\bigwedge F_i) \limp D$, where each $F_i$ is the formula associated with a node on the branch, the left-hand side can be simplified to conjunction normal form (CNF) with at most $n$ conjuncts, where $n$ is the number of features of the dataset.
\end{lemma}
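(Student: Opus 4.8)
\medskip
\noindent\textbf{Proof plan.} The plan is to exploit a structural fact from Section~\ref{subsec:trees}: the formula at every internal node constrains a single feature. Hence negating such a formula stays within that feature, and a conjunction of (possibly negated) node formulae factors cleanly by feature.

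First I would rewrite the left-hand side as $\bigwedge_i \ell_i$, where $\ell_i$ is $F_i$ if the branch turns right at the $i$-th node on it and $\lnot F_i$ if it turns left. Since each $F_i$ is a predicate over one feature, $\ell_i$ is a predicate over that same feature; write $\mathrm{feat}(\ell_i)\in\{1,\dots,n\}$ for its index. Partitioning the literals of the branch by $\mathrm{feat}$ and reordering the conjunction (associativity and commutativity of $\land$) gives
\begin{equation}
\Big(\bigwedge_i \ell_i\Big)\ \equiv\ \bigwedge_{j=1}^{n} C_j, \qquad C_j \ :=\ \bigwedge_{\,\mathrm{feat}(\ell)=j}\!\ell .
\end{equation}
If no node on the branch mentions feature $j$, then $C_j$ is the empty conjunction $\top$ and is simply dropped.

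Next I would show that each nonempty $C_j$ collapses to a single atomic constraint on $x_j$. For a numeric feature the node tests are (as usual) thresholds such as $x_j \le c$, so every literal of $C_j$ is a half-line ($x_j \le c$, $x_j > c$, and so on); their conjunction is the intersection of these half-lines, i.e.\ a single interval constraint on $x_j$ obtained by keeping the tightest lower and tightest upper bound (and equal to $\bot$ when these cross, which cannot occur on a realisable branch). For a categorical feature the tests are equalities $x_j = v$, and the literals and negations in $C_j$ intersect to a single set-membership constraint $x_j \in S_j$. In either case $C_j$ is equivalent to one per-feature constraint; substituting these back yields a conjunction of at most $n$ such constraints over pairwise distinct features, which is the claimed normal form. (If one prefers to count at the level of individual threshold literals, each numeric $C_j$ is a conjunction of at most a lower and an upper bound, giving the bound $\le 2n$ unit clauses; the statement counts one conjunct per feature.)

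The argument is short, and the only thing requiring care is the bookkeeping: fixing precisely what counts as a ``conjunct'', treating uniformly the admissible shapes of node formulae (numeric inequalities and categorical equalities), and disposing of the degenerate cases --- a feature left unconstrained ($C_j=\top$, dropped) and an infeasible branch ($C_j=\bot$). The simplification itself, namely intersecting half-lines or finite sets on a single variable, is elementary.
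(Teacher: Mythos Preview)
Your proposal is correct and follows essentially the same approach as the paper: partition the branch literals by feature and collapse each per-feature group into a single constraint by intersecting half-lines (numeric case) or value sets (nominal case). The paper's proof is slightly more concrete---it works pairwise through the cases $v\geq l \wedge v\geq l'$, $v\geq l \wedge \lnot(v\geq l')$, $\lnot(v\geq l)\wedge\lnot(v\geq l')$, and $v'\in C \wedge v'\in C'$---but the substance is identical, and your handling of the degenerate cases ($C_j=\top$, $C_j=\bot$) is a bit more explicit than the paper's.
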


\begin{proof}
By the construction of decision trees, for any two conjuncts $v\geq l$ and $v\geq l'$ over a numeric feature $v$ that appears in $\bigwedge F_i$, we can simplify them into one conjunct $v\geq l''$ where $l'' = max(l,l')$. 

For any two conjuncts $v\geq l$ and $\lnot (v\geq l')$ over $v$, if $l < l'$, then they can be simplified to $l \leq v < l'$. Otherwise, we have $l\geq l'$. In this case, the two conjuncts do not have an intersection, and there would be no instances at the leaf node and the training algorithm should not let this case happen.

For any two conjuncts $\lnot (v \geq l)$ and $\lnot (v \geq l')$ over $v$, they can be simplified to $v < l''$ where $l'' = min(l,l')$.

For any two conjuncts $v'\in C$ and $v'\in C'$ over a nominal feature $v'$, they can be simplified into one conjunct $v'\in C''$ where $C'' = C \cap C'$. If any of the two conjuncts is negated, e.g., $\lnot (v'\in C)$, we can simply take the complement set $C''' = C_v \setminus C$, where $C_v$ is the full set of permitted discrete values of the feature $v$, and convert the negated conjunct into $v' \in C'''$. The remainder of the proof is analogous. 

Thus, the left-hand side of the branch formula can be simplified to one conjunct per feature.
\qed
\end{proof}

Note that the simplification of Lemma~\ref{lem:simp-conjunct} preserves logical equivalence of the set of decision rules while the other steps of this section do not. One may ask why do we use the node filter when Lemma~\ref{lem:simp-conjunct} can merge the nodes. The reason is that using all the nodes in a branch may result in very narrow and focused explanations, so we use $\theta$ as a parameter to adjust the scope.



The above two steps aim to shorten the decision rules. The third step is to reduce the number of rules by filtering out those of low quality.

\begin{param}[Rule Filter $\rulefilter$]
\label{param:rule-filter}
We measure the ``quality'' of a decision rule ($RQ$) by the following formula:
\begin{equation}
\label{equ:RQ}
    RQ(r)=\frac{(log_2(m) - H(l_r))}{log_2(m)} \times Acc
\end{equation}
where $r$ is the target rule, $m$ is the number of classes, $H(l_r)$ is the entropy of the leaf of the rule, and $Acc$ is the accuracy of the corresponding tree on the OOB dataset. We remove a rule if its $RQ$ is less than $\rulefilter$, which is a real number between $0$ and $1$.
\end{param}

The fourth step merges decision rules into \emph{groups of the same class signature}.

\begin{param}[Leaf Merger $\leafmerger$]
\label{param:leaf-merger}
Given a vote distribution $(n_1,\cdots,n_m)$, where $m$ is the number of classes, we convert the distribution into ratios $(\xi_1$,$\cdots$,$\xi_m)$ where each $\xi_i, 1\leq i \leq m$, is the ratio of class $i$ in the leaf node. The class signature of this leaf node is defined as the tuple $(\ceil{\xi_i/\leafmerger},\cdots,\ceil{\xi_m/\leafmerger})$, where $\leafmerger$ is a real number between $0$ and $1$.
\end{param}


Using the above definition, we divide the set of decision rules into a set of sets $\{G_1,\cdots,G_j\}$. Each $G_i$, $1\leq i \leq j$, contains the set of rules of the same class signature. 
Intuitively, a larger $\leafmerger$ yields fewer distinct equivalent classes/groups and vice versa. We use this parameter to control the number of groups in the final explanation.




For a large-scale random forest, the filtered rules are still a large-scale formula set. In the last step, we control the number of decision rules we get.
\begin{param}[Size Filter $\sizefilter$]
\label{param:size-filter}
In each $G_i$, we take the number of instances in the leaf node of each rule as the \textbf{weight of the rule}, and we select $\sizefilter$ rules in a weight-first manner. 
\end{param}

Associating each rule with the weight is crucial because now the vote distribution has been converted into ratios, and we have lost the information of the number of votes in the distribution. The weight retains this information. For example, we would prefer a ratio of $(0.7,0.3)$ with 100 votes to overwhelm a ratio of $(0.1,0.9)$ with 10 votes.

To summarise, we denote the composition of the above steps as $\simp$, and write the simplification procedure as follows:
\begin{equation}
\label{equ:simp}
    \simp(R_E,\nodefilter,\rulefilter,\leafmerger,\sizefilter) = R_E'
\end{equation}
where $R_E'$ is a set of sets of decision rules where each rule has a weight. We use $R_E'$ as our explanation to the original model.




\subsection{Prediction}
\label{sec:prediction}

To use $R_E'$ in a classification task, let $x$ be a data instance, we first find all the decision rules in $R_E'$ that are satisfied by $x$, then multiply the class signature of each rule by the corresponding weight, and finally add up to get a tuple. The class with the largest value is the output. This procedure is denoted as $R_E'(x) = c$ where $c$ is a class. We give an example in Fig. \ref{fig:prediction} where we put the satisfied rules in red boxes.

\begin{figure}[t!]
    \centering
    \vspace{-10px}
    \includegraphics[width=12cm]{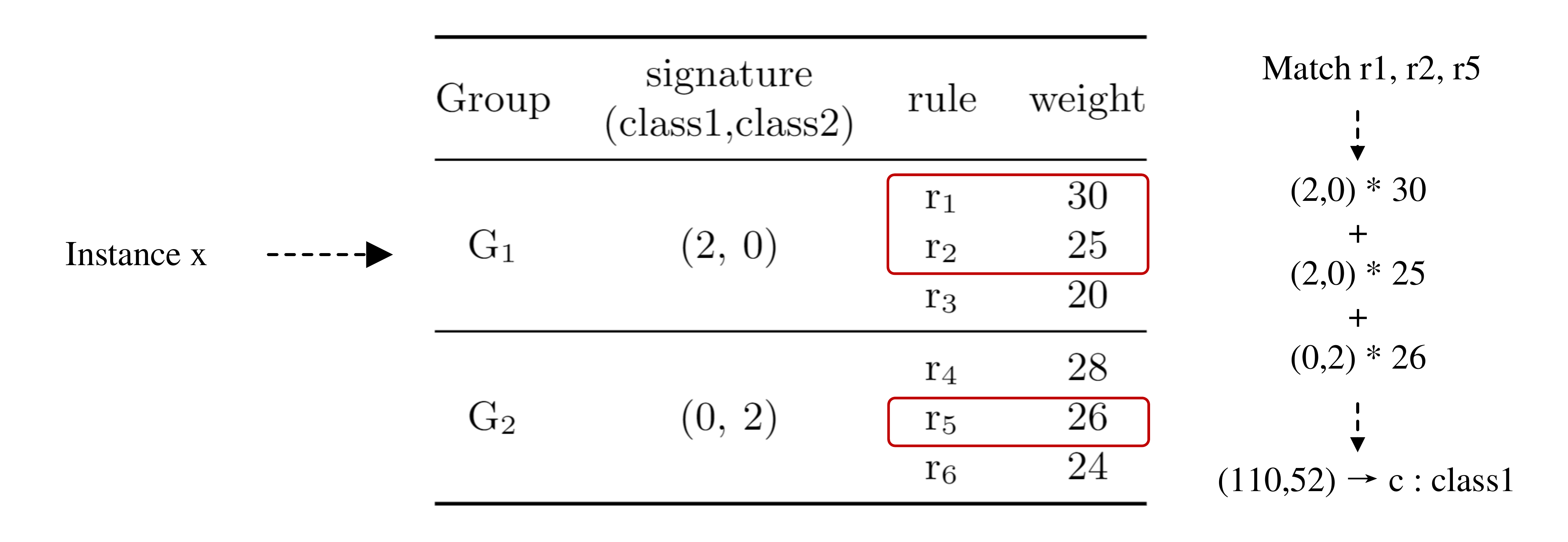}
    \caption{An example of the prediction procedure using the simplified rules $R_E'$.}
    \label{fig:prediction}
\end{figure}

\subsection{Optimal Explanations}
\label{sec:opt}

Now we consider a step further: how to evaluate explanations and find the optimal one? Intuitively, a good explanation should meet the following two criteria: 
\begin{itemize}
    \item The classification behaviour of the explanation $R_E'$ should be similar to the original model $E$.
    \item The explanation should be concise and small.
\end{itemize}

We use \textit{fidelity} to measure the first criterion. Fidelity is defined as the degree of similarity between the predictions of $R_E'$ and $E$ on unseen data~\cite{papenmeier2019}. First, we take the test set without labels as $D$. Then we use the classification results on $D$ from the original model $E$ as the ``ground truth''. Lastly, we evaluate the classification accuracy of the explanation $R_E'$ on $D$ as fidelity. The fidelity component is denoted as $\fstmeasure(R_E',E,D)$.

The second criterion is \textit{scale} which is measured by the total number of conjuncts in the rules of $R_E'$ and is denoted as $\sndmeasure(R_E')$.

The \emph{score} of an explanation $R_E'$ is defined as 
\begin{equation}
\label{equ:optscore}
    \scoreOpt(R_E') = {\fstmeasure(R_E',E,D) \times \accweight}+\frac{1-\accweight}{1+e^{5\times(\frac{\sndmeasure(R_E')}{m \times n}-1)}}
\end{equation}
where $\accweight$ is a real number between 0 and 1, $n$ is the number of features, and $m$ is the number of classes. 
The intuition is that the score grows linearly with fidelity, but it drops significantly when the explanation is too large. The second component is a sigmoid-shaped function. 
Also, a large $\accweight$ puts more importance on fidelity, and a small $\accweight$ puts more importance on scale. 

\begin{figure}[htbp]
\centering
\begin{floatrow}
    \begin{subfigure}[The effect of $\nodefilter$ and $\rulefilter$ on \textit{scale} coverage. (The original scale is 65247.)]{
        \begin{minipage}[t]{0.5\linewidth}
        \centering
        \includegraphics[width=6.3cm]{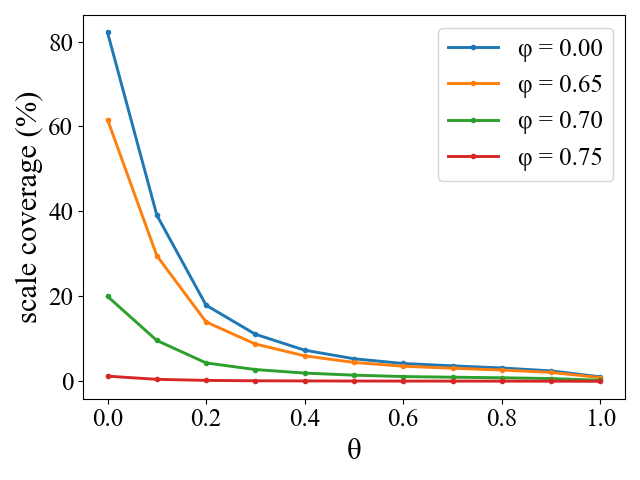}
        \label{fig:theta-phi}
        \end{minipage}}
    \end{subfigure}\hfill
    \begin{subfigure}[The effect of $\sizefilter$ on \textit{fidelity}.]{
        \begin{minipage}[t]{0.5\linewidth}
        \centering
        \includegraphics[width=6.3cm]{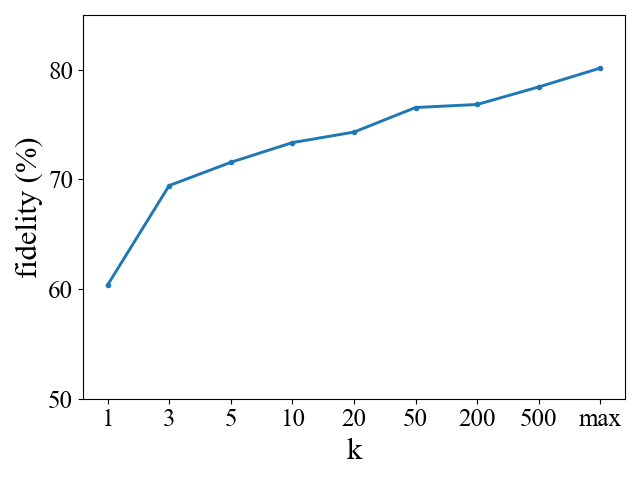}
        \label{fig:k-acc}
        \end{minipage}}
    \end{subfigure}
\caption{The effect of parameters on the two criterion.}
\end{floatrow}
\end{figure}

The parameters in Section~\ref{sec:simp} introduce a great variety of potential explanations. Particularly, the two parameters $\nodefilter$ and $\rulefilter$ control the strictness of the node filter and the rule filter respectively. We experiment on the diabetes dataset~\cite{Dua:2019}
to explore the effect of parameters on the two criteria. Fig. \ref{fig:theta-phi} shows the effect of the $\nodefilter$ and $\rulefilter$ on scale. As $\nodefilter$ and $\rulefilter$ rise, we get significantly smaller explanations. Even when $\nodefilter$ and $\rulefilter$ are both 0, the scale has dropped by about 20\% compare to the original model due to Lemma \ref{lem:simp-conjunct}. We also explore the effect of $\sizefilter$ on fidelity. We set $\nodefilter$ to 0.65 and $\rulefilter$ to 0.7. Fig. \ref{fig:k-acc} shows the average of 50 tests. 
The results show
that a relatively small value of $\sizefilter$ can get a fidelity close to all the rules.

To obtain an optimal explanation, we use the \emph{linearly decreasing inertia weigh particle swarm optimization} algorithm (LDIW-PSO) \cite{785511} to optimise the parameters mentioned above with the $\scoreOpt$ as the fitness, which is the objective function to be optimised. Then we apply the optimal parameters to \textit{\simp} and obtain the optimal explanation $R_{opt}$. We refer to the above procedure as \emph{\framename}.


\subsection{Profile of Equivalent Classes}
\label{sec:pro}

An explanation with high fidelity usually has large scale, while a concise explanation can allow users to quickly understand the predictive behavior of the model at the sacrifice of fidelity. Sometimes the latter is preferred to draw a high-level conclusion of the classification behavior. 
We propose a new method called the \textit{Profile of Equivalent Classes (\proname)}, which is a more concise description of classes based on the extracted decision rules.

The \textit{\proname} process is based on the $R_E'$ obtained by Equation~\ref{equ:simp}. The $R_E'$ may contain several groups, each has no more than $\sizefilter$ rules. Then we merge the decision rules in each group by solving a weighted maximum satisfiable (MAX-SAT) problem to get a new explanation $R_E''$. Different from Section \ref{sec:opt}, \textit{\proname} requires more rules as input, and the number of groups is equal to the number of classes. The number of groups in $R_E''$ is denoted as $\thdmeasure(R_E'')$. For the above needs, we defined another \textit{score} for the optimization in \emph{\proname}:
\begin{equation}
\label{equ:proscore}
    \scorePro(R_E'') = (m-\thdmeasure(R_E'')+1)\times\sndmeasure(R_E'')
\end{equation}
where $m$ is the number of classes. Using $\scorePro$ as fitness, we obtain the optimized result is denoted as $R_{opt}'$. An ideal $R_{opt}'$ has $m$ groups $\{G_1,\cdots,G_m\}$. 
For each group $G_i$ ($1\leq i \leq m$), we associate the weight of each rule to each of its conjuncts, and send all the weighed conjuncts in the group to a SAT solver such as Z3~\cite{deMoura2008}. 
The solver
will return a subset of satisfied conjuncts that maximise the total weight. After solving the MAX-SAT problem for each $G_i$, we obtain a subset of conjuncts. Then we simplify the conjuncts into one rule $r_i'$ using Lemma~\ref{lem:simp-conjunct}. Performing the above steps on all groups, we get the profile of equivalent classes denoted as $R_{pro}$, which has the form
$$R_{pro} = \{r_1' \limp S_1, \cdots , r_m' \limp S_m\}.$$
where $r_i'$ is the logic for predicting the class $S_i$.


\section{Case Studies and Experiment}
\label{sec:case_studies}

In this section, we demonstrate our method through case studies. We used 
scikit-learn to train random forest models. We implemented our method in Python and evaluated it on multiple datasets: adult, credit, diabetes, german, mnist, spambase, all of which are available on OpenML~\cite{openml}. 
There are two important parameters in LDIW-PSO algorithm: particle size and iteration period. In our experiments, both particle size and iteration period are set to 20 by default.
Experiments were conducted on a machine with an Intel Core i9-7960X CPU and 32GB RAM.

\subsection{Case Study 1: Diabetes Prediction}

We first evaluate \emph{\framename} on diabetes dataset~\cite{Dua:2019}, which has 8 features, 2 classes, and 768 samples. The eight features are the number of times pregnant (preg), plasma glucose concentration (plas), diastolic blood pressure (pres), 2-hour serum insulin (insu), triceps skinfold thickness (skin), body mass index (mass), diabetes pedigree function (pedi) and age. We randomly select 100 samples as the testing set, and the rest as the training set. Then we train a random forest with 100 trees and unlimited depth.

\begin{table}[h]
\caption{Optimized parameters and predictive performance of the explanation.}
\setlength{\tabcolsep}{1.5mm}
\begin{tabular}{@{}ccclccc@{}}
\toprule
\multirow{2}{*}{$(\phi,\theta,\psi,k)$} & \multicolumn{2}{c}{$R_E$} & & \multicolumn{3}{c}{$R_{opt}$}  \\ \cmidrule(lr){2-3} \cmidrule(l){5-7} 
                           & scale    & accuracy &  & scale     & accuracy    &  fidelity                            \\ \midrule
(0.55, 0.45, 0.83, 3.0)                  & 102584     & 80\%  &      & 6         & 80\%         & 92\%                          \\ \bottomrule
\end{tabular}
\label{tab:optimising}
\end{table}

\begin{table}[]
\caption{The optimal explanation $R_{opt}$.}
\setlength{\tabcolsep}{5mm}{
\begin{tabular}{|c|c|l|c|}
\hline
Groups     & \begin{tabular}[c]{@{}c@{}}Class Signature\\ (negative,positive)\end{tabular} & \multicolumn{1}{c|}{Rules}                           & Weight \\ \hline
\multirow{2}{*}{Group1} & \multirow{3}{*}{(2.0 , 0.0)}  & $pedi \leq 0.7$               & 30 \\ \cline{3-4} 
                        &                               & $plas \leq 130.0$             & 23 \\ \cline{3-4} 
                        &                               & $plas \leq 157.5$             & 21 \\ \hline
\multirow{2}{*}{Group2} & \multirow{3}{*}{(0.0 , 2.0)}  & $mass > 28.7$                 & 30 \\ \cline{3-4} 
                        &                               & $age > 27.5$                  & 22 \\ \cline{3-4}
                        &                               & $plas > 122.5$                & 20 \\ \hline
\end{tabular}}
\label{tab:diabetes OPT}
\end{table}

We set $\accweight$ to 0.9, and \emph{\framename} will produce an explanation $R_{opt}$. The result is shown in Table \ref{tab:optimising} and Table \ref{tab:diabetes OPT}. 
Recall that $R_E$ is equivalent to the original model $E$ (Proposition~\ref{prop:equiv}).
Both $R_E$ and $R_{opt}$ have 80\% accuracy on the test set. $R_E$ has 11106 rules with 102584 conjuncts, while $R_{opt}$ has 6 rules with 6 conjuncts. The fidelity of $R_{opt}$ is 92\%,
which means it is very similar to the original model.
By observing the decision rules, users can analyze what role each feature plays in the prediction process. In this explanation, if an instance has $plas > 157.5$, 
then the chance of prediction positive diabetes is high. 

\begin{table}[]
\caption{The profile of equivalent classes of a random forest model for diabetes.}
\setlength{\tabcolsep}{1.5mm}
\begin{tabular}{@{}ccccc@{}}
\toprule
&\textbf{negative}      &  & \textbf{positive}&          \\ \cmidrule(r){2-2} \cmidrule(l){4-4}
&$2.0<preg\leq2.5$ &  & $7.5<preg\leq8$&       \\
&$90<plas\leq91.5$ &  & $173.5<plas\leq175$&   \\
&$74<pres\leq74.5$ &  & $70<pres\leq71$&       \\
&$28.5<skin\leq29$ &  & $23.5<skin\leq24$&     \\
&$61.5<insu\leq63$ &  & $126.5<insu\leq127.5$& \\
&$25.9<mass\leq26$ &  & $32.9<mass\leq33$&     \\
&$pedi=0.2$        &  & $pedi=0.5$&          \\
&$23.5<age\leq24$  &  & $33<age\leq33.5$&      \\ \bottomrule
\end{tabular}
\label{tab:diabetes Pro}
\end{table}

Table \ref{tab:diabetes Pro} gives the profile 
of equivalent classes derived from $\proname$. The profile is divided into two groups corresponding to equivalent class labels: negative diabetes and positive diabetes. Medical practitioners can quickly obtain the salient feature values corresponding to each class according to the profile.

\subsection{Case Study 2: Digit Recognition (MNIST)}

In order to visually demonstrate the profile, we use the MNIST dataset~\cite{openml} to illustrate $\proname$. The MNIST dataset contains 70,000 images of size 28 $\times$ 28 pixels. In order to reduce the complexity, we randomly picked 10,000 samples from the dataset as the training set, and train a random forest with 100 trees and 15 max-depth. $\proname$ produces a profile $R_{pro}$ which gives the range of some pixels. Then we take the median of the range as the pixel value. The remaining pixels that do not appear in the profile are set to light blue. 

\begin{figure}
    \centering
    \includegraphics[width=12cm]{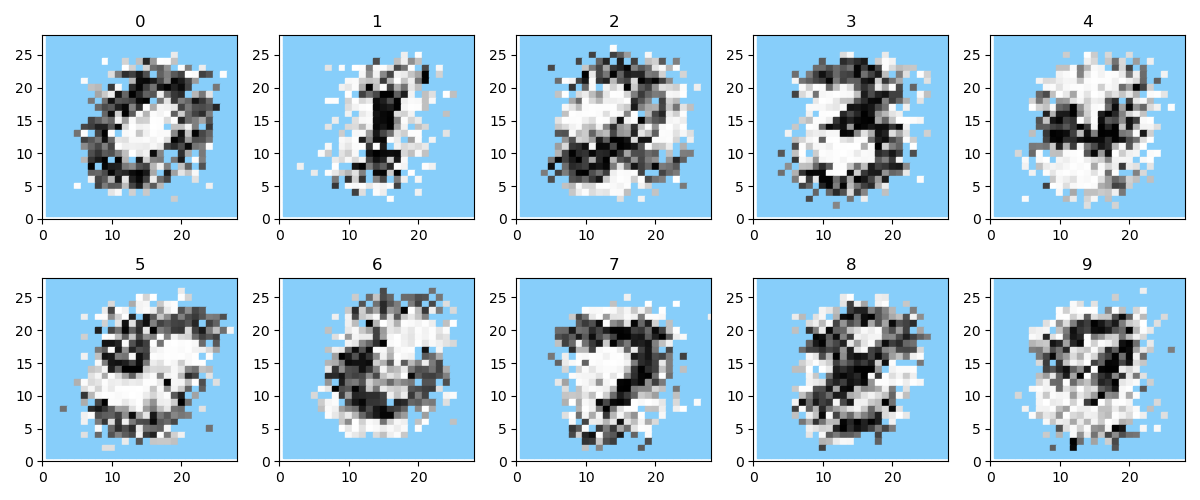}
    \caption{As visualisation of $R_{pro}$ on MNIST.}
    \label{fig:profile}
\end{figure}

From Fig.~\ref{fig:profile} we can see the profile's visualization of each digit. 
For peripheral pixels that are not significant in the prediction, the profile will not give a description. As we can see, the profiles indeed yield human-understandable visualisation of each class.

\subsection{Experiment and Comparison}

We compared the proposed method to a baseline method --- Hara and Hayashi’s approach named degragTrees~\cite{hara2018}. In their work, defragTrees has been compared with BATrees~\cite{Breiman1996BornAT}, inTrees~\cite{deng2019} and Node Harvest~\cite{Meinshausen2009NodeH}. Their result suggests that defragTrees generated smaller set of rules with higher fidelity than the other methods. 
We choose the following datasets: adult, credit, diabetes, german, spambase~\cite{Dua:2019}.
We split the datasets into two subsets at random: a 70\% training set, and a 30\% testing set. Then we train the ensemble trees with 100 trees and 10 max-depth for \emph{\framename} and defragTrees. For \emph{\framename}, we set two experimental groups and set $\accweight$ as 0.5 and 0.9 respectively.

\begin{table}[]
\caption{\emph{\framename} and defragTrees}
\setlength{\tabcolsep}{1.8mm}
\begin{tabular}{@{}lccccclccl@{}}
\toprule
\multirow{3}{*}{Data} & \multicolumn{2}{l}{\multirow{2}{*}{defragTrees}} & \multicolumn{1}{l}{} & \multicolumn{5}{l}{\emph{\framename}}                           &  \\
                      & \multicolumn{2}{l}{}                             & \multicolumn{1}{l}{} & \multicolumn{2}{l}{$\accweight$=0.5} &     & \multicolumn{2}{l}{$\accweight$=0.9} &  \\ \cmidrule(lr){2-3} \cmidrule(lr){5-6} \cmidrule(lr){8-9}
         & scale & fidelity\% &  & scale & fidelity\% &  & scale & fidelity\% &  \\ \cmidrule(r){1-9}
adult    & 65.5        & 81.9   &  & \textbf{37.2}& 87.0   &  & 43.2  & \textbf{88.4}&  \\
credit   & 8.7         & 85     &  & \textbf{2.8} & 94.2   &  & 9.8   & \textbf{94.6}&  \\
diabetes & 14.2        & 74.8   &  & \textbf{4}   & 85.8   &  & 8.3   & \textbf{88.3}&  \\
german   & \textbf{5.4}& 69.8   &  & 19           & 87.3   &  & 50.2  & \textbf{88.7}&  \\
spambase & 82          & 91.7   &  & \textbf{22}  & 91.6   &  & 40.6  & \textbf{92.9}&  \\ \bottomrule
\end{tabular}
\label{tab:comparison}
\end{table}

We compare the two methods with respect to the two criteria: scale and fidelity. We conducted the experiment over ten random data realizations for each dataset. Table \ref{tab:comparison} shows the average values of the ten tests. The number in bold is the best scale/fidelity value for each dataset. Table \ref{tab:comparison} shows that the scale of explanations generated by \emph{\framename} with 0.5 $\accweight$ is the smallest except on the german dataset, and the fidelity is generally better than defragTrees. The small-scale explanation on german generated by defragTrees has much lower fidelity. \emph{\framename} with 0.9 $\accweight$ can generate the highest fidelity explanation with similar scale than defragTrees. 
In both settings of $\accweight$, \emph{\framename} generates superior explanations in most cases.

\begin{figure}
    \centering
    \includegraphics[width=12cm]{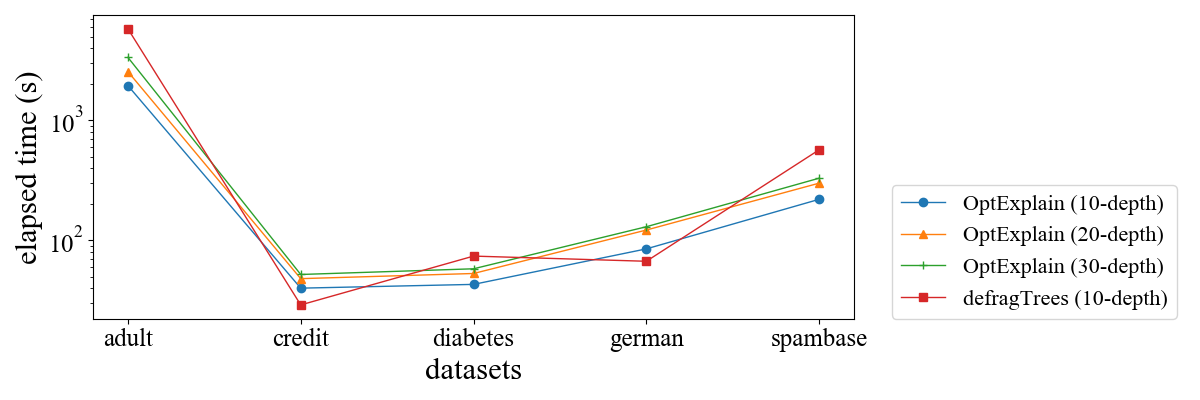}
    \caption{Mean elapsed time comperision. (The y-axis is logarithmic scale.)}
    \label{fig:time}
\end{figure}

Finally, we visually compare the computation time in Fig. \ref{fig:time}, and show the mean time of 10 tests. We run \emph{\framename} on models of 100 trees with depth of 10, 20, and 30 respectively. The method defragTrees on 100 trees and depth 10 is also used as a reference. The results show that \emph{\framename} has better computational performance than defragTrees on adult, diabetes and spambase. 
\emph{\framename} can also deal with very large models, and for the above three datasets, \emph{\framename} can generate explanations for trees of depth 30 faster than defragTrees can for depth 10.
For a concrete example, the mean computation time of \emph{\framename} on the adult dataset is 1,949 s, while the time of defragTrees is 5,797 s (both depth 10).

\section{Related Work}
\label{sec:related}

There are a variety of approaches for tackling machine learning interpretability. Some recent and popular ones are focused on local explanations. LIME~\cite{lime2016} is such a tool that finds linear approximations of the prediction model and gives importance weights for certain predicates used in the prediction. Anchors~\cite{ribeiro2018} generates ``if-then'' style explanations for predictions. Such explanations have similar forms as our decision rules and are considered intuitive and easy to understand by the user. Shapley (SHAP) Values~\cite{Lundberg2017} are often used to extract importance scores and impacts on features. Like LIME, SHAP also provides user-friendly graphical presentations (e.g., bar charts) for explaining predictions. It should be noted that SHAP can also be used to obtain global feature importance. The above three methods are \emph{model-agnostic}, which means that in the process of providing explanations and making machine learning more ``white-box'', they take prediction models as a ``black-box'' and attempt to find patterns of \emph{features} when the model makes predictions. An advantage is that they can be applied to different machine learning techniques, including ensemble trees and neural networks.  CHIRPS~\cite{hatwell2020} is another technique for local explanations. In contrast to the above techniques, CHIRPS looks into decision trees and uses frequent pattern mining on decision nodes to obtain decision rules as the explanation. 

Global explanations are more closely related to this work.
Recent examples include Hara and Hayashi's approach~\cite{hara2018} that uses Bayesian model selection to extract decision rules. Deng's inTrees~\cite{deng2019} extracts, selects and prunes rules from a set of decision trees and uses frequent pattern analysis to summarise rules into a smaller prediction model. 

Most relevant techniques come from a statistics perspective. Their simplification process often consists of selection, pruning and frequency analysis. By contrast, our work comes from a logician's point of view and, on top of the usual selection and pruning operations, uses automated reasoning to simplify logical formulae and find abstractions of equivalent classes. We see many related methods as complementary ones rather than competitors because they output in different forms. For example, one can combine SHAP values and our work to form a more comprehensive explanation.

\section{Conclusion and Future Work}
\label{sec:conc}

This paper presents a streamlined procedure for extracting optimal \emph{logical} explanations from ensemble trees models. As an additional feature, our method can also output the ``profile'' of each class so that the user can \emph{see} how the model predicts in different cases. We give two case studies to illustrate how our method works and show that our method outperforms state-of-the-art through experimental results. 

As future work, we plan to use this work to perform efficient verification tasks. In particular, we will use the optimal explanation as an approximation of the original model and try to ``debug'' the model against user-specified properties. The rules that violate the property can serve as constraints to narrow down the search space when finding counterexamples. Another important future direction is to convert deep neural networks to ensemble trees and extend this work to explain deep learning. 

%
%
%
\bibliographystyle{splncs04}
\bibliography{main}

\end{document}